\newtheorem{theorem}{Theorem}[section]
\theoremstyle{definition}
\newtheorem{definition}[theorem]{Definition}
\newtheorem{remark}[theorem]{Remark}
\title[On shortened and punctured cyclic codes]
      {On shortened and punctured cyclic codes}
\author[Arti Yardi and Ruud Pellikaan]{}
\subjclass{Primary: 94C99; Secondary: 94B27, 94B60.}
 \keywords{Coding theory, cyclic codes xxxx}
\thanks{$^*$ Corresponding author}
\begin{document}
\maketitle

\centerline{\scshape Arti Yardi$^*$}
\medskip
{\footnotesize
  \centerline{IRIT/INP-ENSEEIHT, University of Toulouse}
   \centerline{Toulouse, France}
}

\bigskip

\centerline{\scshape Ruud Pellikaan}
\medskip
{\footnotesize
  \centerline{Department of Mathematics and Computer Science}
   \centerline{Eindhoven University of Technology}
    \centerline{P.O. Box 513, 5600 MB Eindhoven, The Netherlands}
}

\bigskip
\centerline{(Communicated by xxx)}

\begin{abstract}
The problem of identifying whether the family of cyclic codes is asymptotically good or not is a long-standing
open problem in the field of coding theory. It is known in the literature that some families of cyclic codes such
as BCH codes and Reed-Solomon codes are asymptotically bad, however in general the answer to this question is not
known. A recent result by Nelson and Van Zwam shows that, all linear codes can be obtained by a sequence of
puncturing and/or shortening of a collection of asymptotically good codes~\cite{Nelson_2015}. In this paper, we prove that any linear
code can be obtained by a sequence of puncturing and/or shortening of some cyclic code. Therefore the result that all
codes can be obtained by shortening and/or puncturing cyclic codes leaves the possibility open that cyclic codes are
asymptotically good.
\end{abstract}

%
\section{Introduction}
%
%
A family of codes is called as \textit{asymptotically good} if it contains 
an infinite sequence of codes such that both the rate and the ratio of minimum distance to length of every code in this sequence
is bounded away from zero~\cite[Ch.~9]{Macwilliams_Sloane_1977}.
The problem of identifying asymptotically good families of codes have been well studied in the 
literature~\cite{Justesen_1972,Justesen_1973, Lin_67, Berlekamp_74,Roth_1992,Schulman_2002,Ling_2003}.
Justesen codes, AG codes, and Expander codes are some examples of the known families of asymptotically good codes.
While the family of BCH codes is known to be asymptotically bad~\cite{Lin_67},  
some families of quasi-cyclic codes have been identified as asymptotically good~\cite{Ling_2003}. 
However, whether the family of cyclic codes is asymptotically good or not is 
a long-standing open problem in the literature~\cite{Assmus1965cyclic, P_Charpin_open}.

%
This problem of identifying whether cyclic codes are asymptotically good or not was first addressed by 
Assmus, Mattson, and Turyn~\cite{Assmus1965cyclic}. Since then this problem has been studied by
various researchers~\cite{Lin_67,Berlekamp_74,Berman_1967, Massey_repeated_cyclic_1991,Perez_2006}. 
Lin and Weldon proved that the family of BCH codes is asymptotically bad~\cite{Lin_67}. 
Berlekamp and Justesen constructed a family of cyclic codes that performs better than BCH codes, 
however this family also turned out to be asymptotically bad~\cite{Berlekamp_74}.
Berman provided a necessary condition for the sequence of cyclic codes to be asymptotically good~\cite{Berman_1967}. 
He proved that the necessary condition for the sequence of cyclic codes to be asymptotically good is that  
the number of distinct prime factors of the lengths of cyclic codes should tend to infinity.
Mart{\'\i}nez-P{\'e}rez and Willems have also provided 
similar necessary conditions for cyclic codes to be asymptotically good and provided some classes 
of cyclic codes that were also identified as asymptotically bad~\cite{Perez_2006}.
Castagnoli et al.~showed that if there exists an asymptotically good sequence of cyclic codes (simple-root or repeated-root), then 
an asymptotically good sequence of simple-root cyclic codes can also be identified~\cite{Massey_repeated_cyclic_1991}.

%
While the existing literature provides a sequence of asymptotically bad cyclic codes, 
the possibility of existence of a good sequence of cyclic codes cannot be denied.
A recent result by Nelson and Van Zwam shows that all linear codes can be obtained by a sequence of puncturing 
and/or shortening of a collection of asymptotically good codes~\cite{Nelson_2015}. 
This result provides a necessary condition for a given class of codes to be asymptotically good.
One can use this result to give an independent proof of the fact that graphic and co-graphic codes are not asymptotically 
good as shown by Kashyap~\cite{N_Kashyap_2008}.
In this paper, we prove that the class of cyclic codes satisfy this necessary condition, i.e., 
any linear code can be obtained by a sequence of puncturing and/or shortening of some cyclic code.
Therefore the result that all codes can be obtained by shortening and puncturing cyclic codes leaves the possibility 
open that cyclic codes are asymptotically good.

%
The remaining paper is organized as follows.
In Section~\ref{Section_Preliminaries}, we provide some notation and preliminaries required in the paper. 
The main result of the paper is provided in Section~\ref{Section_main_result}, followed by
some concluding remarks in Section~\ref{Section_Conclusion}.

%
\section{Notation and preliminaries}
\label{Section_Preliminaries}
The finite field with $q$ elements is denoted by $\mathbb{F}_q$ 
and the polynomial ring with coefficients from $\mathbb{F}_q$ is denoted by $\mathbb{F}_q[X]$.
Without loss of generality, we consider any vector as a row vector and
use boldface letters to indicate vectors.
The components of a vector are indicated by lowercase letters.
For example, a vector $\mathbf{v} \in \mathbb{F}_q^n$ is given by 
$\mathbf{v} = \begin{bmatrix} v_0 & v_1 & \ldots & v_{n-1} \end{bmatrix}$,
where $v_i \in \mathbb{F}_q$ is the $i$th component of $\mathbf{v}$, for $i = 0,1,\ldots,n-1$.
The polynomial representation of $\mathbf{v}$ is given by $\mathbf{v}(X)=v_0+v_1X+v_2X^2+\ldots+v_{n-1}X^{n-1}$.
An all-zero vector of length $n$ is denoted by $\mathbf{0}_n$.
The collection of linear block codes of length $n$ and dimension $k$ is denoted by $C(n,k)$ and 
a linear block code in this collection in denoted by $C$.
The cyclic code of length $n$ and generator polynomial $g(X) \in \mathbb{F}_q[X]$ is denoted by 
$C(n,g)$.
Throughout this paper, we consider linear block codes that are defined over $\mathbb{F}_q$.

We shall next recall the definition of puncturing and shortening operations on a linear block code $C$.
\begin{definition}
\label{Definition_Puncturing}
Puncturing of a code~\cite[Ch.~1]{Huffman_Pless_ECC}:
Let $C$ be an $[n, k]$ linear block code over $\mathbb{F}_q$ and let $\mathcal{L}$ be the set of any $l$ coordinate locations. 
Then the puncturing operation on $C$ at coordinate locations in the set $\mathcal{L}$
consists of deleting the entries of every codeword in $C$ at locations in the set $\mathcal{L}$.
\end{definition}
\begin{definition}
\label{Definition_Shortening}
Shortening of a code~\cite[Ch.~1]{Huffman_Pless_ECC}:
Let $C$ and $\mathcal{L}$ be as defined in Definition~\ref{Definition_Puncturing}. 
Then the shortening operation on $C$ at coordinate locations in the set $\mathcal{L}$
consists of two steps. In the first step, consider the set $\mathcal{W}$ of codewords in $C$ that 
have zeros at the locations in the set $\mathcal{L}$. In the second step, puncturing
operation is performed on $\mathcal{W}$ at coordinate locations in the set $\mathcal{L}$.
\end{definition}

\begin{remark}
It is known that the code obtained after the above mentioned puncturing and shortening operation is a linear block code of length $n-l$~\cite[Ch.~1]{Huffman_Pless_ECC}.
\end{remark}

%
\section{Main result}
\label{Section_main_result}
In this section, we provide the main result of the paper in the following theorem.

\begin{theorem}
\label{Theorem_main_LBC_by_shortening_puncturing}
Any $\mathbb{F}_q$ linear block code can be obtained by a sequence of puncturing and/or shortening of some cyclic code.
\end{theorem}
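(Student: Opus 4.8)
The plan is to start from an arbitrary linear code $C \in C(n,k)$ and produce a cyclic code $C(N,g)$ of some length $N$ together with a coordinate set $\mathcal{L}$ so that shortening (or puncturing) $C(N,g)$ at $\mathcal{L}$ returns exactly $C$. The natural first move is to encode $C$ by a generator matrix $G$ in systematic form, $G = [\, I_k \mid A \,]$, so that a generic codeword is $(\mathbf{u}, \mathbf{u}A)$ for $\mathbf{u} \in \mathbb{F}_q^k$; this reduces the task to realizing a prescribed $k \times (n-k)$ matrix $A$ as a ``window'' of some cyclic code. The key structural fact I would exploit is that every linear code of length $n$ is a shortening of a cyclic code of length $N$ for $N$ large and coprime to $q$ — indeed, for $N$ with $\gcd(N,q)=1$ the ambient space $\mathbb{F}_q[X]/(X^N-1)$ decomposes into minimal ideals, and one has a great deal of freedom in choosing which cyclotomic cosets to exclude from the zero set, hence in choosing $g(X)$. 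So the first real step is to fix $N$ (to be chosen later, divisible by $n$ or a multiple thereof, and coprime to $q$) and describe the target code inside $\mathbb{F}_q[X]/(X^N-1)$.

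The heart of the argument is a counting/dimension step: I want a cyclic code $C(N,g)$ of length $N$ and dimension $K = N - \deg g$, together with $l = N - n$ shortening coordinates, such that the shortened code is $C$. Shortening at $l$ coordinates generically drops the dimension by $l$, so I would arrange $K = k + l = k + N - n$, i.e. $\deg g = n - k$; this pins down the \emph{degree} of the generator polynomial but not $g$ itself. The subtlety is that the shortened code must not merely have the right parameters but must equal the specific code $C$. Here I would use the polynomial description: a codeword of $C(N,g)$ is $g(X) m(X) \bmod (X^N-1)$ with $\deg m < K$; shortening at the last $l = N-n$ positions selects those codewords whose coefficients of $X^{n}, X^{n+1}, \dots, X^{N-1}$ vanish, and then reads off the first $n$ coefficients. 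If $N \ge n + \deg g$, then for $m(X)$ of degree $< n - \deg g$ the product $g(X)m(X)$ already has degree $< n$ with no wraparound, giving the ``truncated'' code $\{ g(X) m(X) : \deg m < n - \deg g\}$, which is just the image of multiplication-by-$g$ on low-degree polynomials. The remaining task is to choose $g$ of degree $n-k$ so that this truncated-polynomial-multiplication code coincides with $C$ — which is possible precisely when $C$ admits a generator matrix in ``Toeplitz/shift'' form built from the coefficients of $g$, and not every $C$ does.

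This is where the main obstacle lies, and I expect the resolution to require \emph{iterating} the shortening/puncturing operations rather than doing it in one shot, or else enlarging $N$ much further and combining several cyclotomic components. One robust route: first realize $C$ (up to a coordinate permutation, which we are allowed because the definition of puncturing/shortening lets us pick \emph{any} coordinate set $\mathcal{L}$) as a \emph{subcode obtained by puncturing} a longer, more structured code — for instance, embed $C$ into a Reed–Solomon-type or a full $\mathbb{F}_q[X]/(X^N-1)$ picture where cyclicity is automatic, then peel off the unwanted coordinates. Concretely I would: (i) pick a prime (or prime power) $N$ with $\mathbb{F}_q^\times$ acting suitably, so that $X^N - 1$ factors with enough small-degree irreducible factors to build any prescribed parity-check structure; (ii) build a cyclic code $C(N,g)$ whose parity-check matrix, restricted to a chosen $n$-subset of coordinates, equals a parity-check matrix of $C$, padding the other $N-n$ coordinates arbitrarily; (iii) show that shortening at those $N-n$ coordinates kills exactly the padded part and returns $C$. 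Verifying (ii) — that the rows of a cyclic parity-check matrix can be made to restrict to an \emph{arbitrary} $(n-k)\times n$ matrix on a chosen coordinate window — is the crux; I would attack it by a dimension count over all admissible $g$ of the right degree together with all choices of the coordinate window, showing the map ``$(g,\text{window}) \mapsto$ restricted parity-check code'' is surjective onto $C(n,k)$, possibly after allowing $N$ to grow with $n$ and $q$. If a single cyclic code does not suffice, I would compose two steps: shorten a big cyclic code down to an intermediate code that is a \emph{cyclic code of composite length} (a ``concatenation-friendly'' length), and then puncture/shorten that down to $C$; the flexibility of choosing $\mathcal{L}$ at each stage is what ultimately makes an arbitrary $C$ reachable.
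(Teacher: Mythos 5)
Your proposal correctly reaches the crux --- that shortening a cyclic code $C(N,g)$ at the tail coordinates yields the ``truncated multiplication-by-$g$'' code, whose generator matrix is Toeplitz in the coefficients of $g$, and that an arbitrary code $C$ need not admit such a Toeplitz generator matrix --- but it does not resolve it. The resolutions you sketch are not arguments: the assertion that ``every linear code of length $n$ is a shortening of a cyclic code of length $N$'' is essentially the statement to be proved and cannot be used as a ``key structural fact''; the proposed dimension count showing that $(g,\text{window}) \mapsto$ restricted parity-check code is surjective onto $C(n,k)$ is only conjectured, and a crude count is not obviously in your favour (the number of codes grows much faster than the number of degree-$(n-k)$ polynomials unless $N$ is allowed to grow, and then you must still exhibit a witness); and the fallback ``compose two stages of shortening/puncturing'' is left without any concrete scheme. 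There is also a secondary gap you never address: for $g$ to generate a cyclic code of length coprime to $q$ it must be squarefree, and an arbitrary polynomial built to encode $C$ need not be.

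The paper closes exactly these gaps by an explicit construction rather than a counting argument. It does not try to make $C$ itself a truncated cyclic code; instead it takes a basis $\mathbf{v}_1,\dots,\mathbf{v}_k$ of $C$ and packs it into the coefficient vector of $g$, as $\mathbf{f}=[\,1\ [\mathbf{v}_1\ \mathbf{0}_n]\ \cdots\ [\mathbf{v}_{k-1}\ \mathbf{0}_n]\ \mathbf{v}_k\ 1\,]$, with zero blocks separating the basis vectors and $1$'s at both ends. The squarefreeness issue is handled by the characteristic-$p$ spreading trick $g(X)=\sum_i f_iX^{pi}+X^{(m-1)p+1}$, so that $g'(X)=X^{(m-1)p}$ and $\gcd(g,g')=1$; hence $g$ divides $X^{n'}-1$ for some $n'$ with $\gcd(n',q)=1$, and $n'$ is chosen large enough that the cyclic code has rate at least $1/2$, guaranteeing enough rows in the Toeplitz generator matrix. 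Then a designed sequence of five shortening/puncturing steps first undoes the $p$-spreading, then eliminates all cyclic shifts except those by multiples of $2n$ (the $1$'s bordering $\mathbf{f}$ and the leading nonzero entries of the $\mathbf{v}_j$, after ordering by trailing-zero length, force every discarded shift to be nonzero on the shortened window, which is what makes each shortening kill exactly the intended rows), and finally punctures down to an $n$-coordinate window in which the surviving rows are exactly $\mathbf{v}_k,\dots,\mathbf{v}_1$. In short, you identified the right obstacle but lack the two ideas that make the theorem go through: encoding the basis of $C$ inside the generator polynomial with separating zero blocks, and the derivative/characteristic trick that makes that polynomial a legitimate generator of a cyclic code of length coprime to $q$.
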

\begin{proof}
Let $C$ be any $\mathbb{F}_q$ linear block code of length $n$ and dimension $k$.
Let $\{ \mathbf{v}_1, \mathbf{v}_2, \ldots, \mathbf{v}_k \}$ be a basis of $C$.
Using this basis, we now construct a cyclic code such that by a sequence of
puncturing and/or shortening of this cyclic code, it is possible to obtain code $C$.
Corresponding to $\{ \mathbf{v}_1, \mathbf{v}_2, \ldots, \mathbf{v}_k \}$, 
define a vector $\mathbf{f}$ as follows
\begin{equation}
\begin{aligned}
\mathbf{f} &\coloneqq \Big[ 1 \mbox{~~} [\mathbf{v}_1 \mbox{~~} \mathbf{0}_n]  \mbox{~~} [\mathbf{v}_2 \mbox{~} \mathbf{0}_n] \mbox{~~} [\mathbf{v}_3 \mbox{~} \mathbf{0}_n] \mbox{~~} \cdots \mbox{~~} [\mathbf{v}_{k-1} \mbox{~} \mathbf{0}_n] \mbox{~~}  \mbox{~} \mathbf{v}_k \mbox{~~} 1 \Big],\\
	   &= \hspace{0.035in} \Big[f_0 \mbox{~~} f_1 \mbox{~~} f_2 \mbox{~~} \ldots \mbox{~~} f_{m}\Big], 
\end{aligned}
\label{Eqn_f}
\end{equation}
where $m = 2n(k-1) + n + 1$ and $f_0, f_1, \ldots, f_m$ are the components of $\mathbf{f}$ such that $f_0=f_m=1$.
Let $f(X)$ be the polynomial corresponding to $\mathbf{f}$.
Let $p$ be the characteristic of the finite field $\mathbb{F}_q$.
Considering the coefficients of $f(X)$, define a vector $\mathbf{g}$ as follows
\begin{align}
\mathbf{g} &\coloneqq \Big[ f_0 \mbox{~~} \mathbf{0}_{p-1} \mbox{~~} f_1  \mbox{~~} \mathbf{0}_{p-1} \mbox{~~} \cdots \mbox{~~} \mathbf{0}_{p-1} \mbox{~~} f_{m-1}\mbox{~~} f_m \Big],
\end{align}
where $\mathbf{0}_{p-1}$ is an all-zero vector of length $p-1$.
Let $g(X)$ be the polynomial corresponding to $\mathbf{g}$. It is given by,
\begin{align}
g(X) \coloneqq \sum_{i=0}^{m-1} f_i X^{pi} + X^{[(m-1)p + 1]}.
\label{Eqn_gX}
\end{align}
The derivative $g^{\prime}(X)$ of $g(X)$ is given by,
\begin{align}
g^{\prime}(X) &= \sum_{i=0}^{m-1} pf_i X^{pi-1} + \big[(m-1)p + 1\big] X^{(m-1)p} \\
	      &= X^{(m-1)p}, 
\end{align}
where the last equality is obtained since $p=0$ in $\mathbb{F}_q$.
Since $g(0)$ is not equal to zero, it follows that the greatest common divisor of $g(X)$ and $g^{\prime}(X)$ is equal to one. 
This implies that $g(X)$ does not have multiple zeros~\cite[Ch.~1, Thm.~1.86]{Lidl86}.
Furthermore there exists an extension $\mathbf{F}_{q^e}$ of $\mathbb{F}_q$ such that all zeros of $g(X)$ are non-zero elements of this finite extension, 
where $e$ is some positive integer.
Hence $g(X)$ divides $X^N-1$ where $N=q^e-1$ and $\mbox{gcd}(N,q)=1$.
Let $n^{\prime}$ be the shortest integer such that $g(X)$ divides $X^{n^{\prime}}-1$, $\mbox{gcd}(n^{\prime},q)=1$,
and $2(n^{\prime}-\deg(g)) \geq n^{\prime}$. 
Note that if $g(X)$ has multiple zeros, it will never divide $X^{n^{\prime}}-1$ for some integer $n^{\prime}$ 
such that $\mbox{gcd}(n^{\prime},q)=1$~\cite{Lidl86}.
The necessity of the condition $2(n^{\prime}-\deg(g)) \geq n^{\prime}$ will be explained later in the proof. 
From (\ref{Eqn_f}) we have $f_m=1$ and this implies that $g(X)$ is a monic polynomial.
It is known that $g(X)$ that satisfies the above mentioned conditions generates the cyclic code $C(n^{\prime},g)$
of length $n^{\prime}$~\cite[Ch.~4]{Huffman_Pless_ECC}.
Let $k^{\prime} = n^{\prime}-\deg(g)$ be the dimension of $C(n^{\prime},g)$. 
The condition $2(n^{\prime}-\deg(g)) \geq n^{\prime}$ implies that 
$2(n^{\prime}-\deg(g)) = 2 k^{\prime} \geq n^{\prime}$, i.e., 
the rate of the code $C(n^{\prime},g)$ is greater than or equal to $1/2$.
We next prove that $C(n^{\prime},g)$ is the required cyclic code of the proof.

Let $G$ be a generator matrix of $C(n^{\prime},g)$.
In order to perform a sequence of puncturing and/or shortening operations on $C(n^{\prime},g)$, 
we will first write the rows of $G$ in a convenient form. 
The first row $\mathbf{g}_0$ of $G$ can be given by,
\begin{align}
\mathbf{g}_0 &\coloneqq \Big[ f_0 \mbox{~~} \mathbf{0}_{p-1} \mbox{~~} f_1  \mbox{~~} \mathbf{0}_{p-1} \mbox{~~} \cdots \mbox{~~} \mathbf{0}_{p-1} \mbox{~~} f_{m-1} \mbox{~~} f_m \mbox{~~} \mathbf{0}_{k^{\prime}-1} \Big].
\label{Eqn_G_first_row_pre}
\end{align}
The $i$th row $\mathbf{g}_i$ of $G$ can be obtained by considering $i$ right cyclic shifts of $\mathbf{g}_0$~\cite{LinCostello2004},
i.e., $\mathbf{g}_i$ is given by,
\begin{align}
\mathbf{g}_i \coloneqq \Big[  \mathbf{0}_{i} \mbox{~~} f_0 \mbox{~~} \mathbf{0}_{p-1} \mbox{~~} f_1  \mbox{~~} \mathbf{0}_{p-1} \mbox{~~} \cdots \mbox{~~} \mathbf{0}_{p-1} \mbox{~~} f_{m-1} \mbox{~~} f_m \mbox{~~} \mathbf{0}_{k^{\prime}-1-i} \Big],
\label{Eqn_G_ith_row_pre}
\end{align}
where $i = 0, 1,\ldots, k^{\prime}-1$.
The condition $2k^{\prime} \geq n^{\prime}$ implies that $k^{\prime} \geq n^{\prime} - k^{\prime} = \deg(g)$. 
Recall that $\deg(g) = (m-1)p+1$ and this implies that $k^{\prime}-1 \geq (m-1)p$.
Thus the condition $2k^{\prime} \geq n^{\prime}$ ensures that $G$ has at least $(m-1)p$ rows.
The generator matrix $G$ can now be written as
\footnotesize
\begin{align}
%
\begin{bmatrix}
 \mathbf{g}_0\\
 \vdots\\
 \mathbf{g}_{p}\\
 \vdots\\ 
 \mathbf{g}_{(m-1)p}\\
 \vdots \\
 \mathbf{g}_{k^{\prime}-1} 
\end{bmatrix} 
=
\left[
\hspace{-0.05in}
\begin{array}{c c c c c :c:  c c c c c  c c c c c c c}
f_0 & \mathbf{0}_{p-1} & f_1 & \mathbf{0}_{p-1}  & \cdot   & f_{m-1} & f_m & \mathbf{0}_{p-2} & 0    & \cdot & \cdot & \cdot & 0\\
\vdots & &  &  \vdots       &   &  \vdots  &\vdots &  \vdots    &  \vdots   & &  &   & \vdots   \\
0 & \mathbf{0}_{p-1} & f_0 & \mathbf{0}_{p-1}  & \cdot  & f_{m-2} & 0 & \mathbf{0}_{p-2} &  f_{m-1} & f_m & 0   & \cdot  & 0\\
\vdots &  &  \vdots       &    &    & \vdots&  &  \vdots   & &  \ddots &  & &\vdots  \\
0 & \cdot & \cdot & \cdot   &  \cdot & f_0 & 0 & \mathbf{0}_{p-2} & f_1 & \cdot  & f_m  & \cdot  & 0\\
\vdots &  &  \vdots       &    &      & &  \vdots & & &   & &  \ddots & \vdots   \\
0 & \cdot & \cdot & \cdot   & \cdot   & \cdot & \cdot & \cdot &  & \cdot & \cdot  & \cdot   & f_m\\
\end{array} 
\hspace{-0.05in}
\right], 
\label{Eqn_G_matrix_pre}
\end{align}
\normalsize
where the coordinate location of the coefficient $f_{m-1}$ in row $\mathbf{g}_0$ is same as that of 
the coordinate location of the coefficient $f_{m-i-1}$ in row $\mathbf{g}_{ip}$, for $i = 1,2,\ldots, m-1$, 
which is indicated using the two vertical dashed lines.
The code $C$ can now be obtained by a sequence of puncturing and/or shortening of $C(n^{\prime},g)$ in the following five steps.

~\\
A] \textit{Shortening of $C(n^{\prime},g)$ at coordinate locations in the range $\big[ (j-1)p + 2, jp\big]$, 
for $j = 1,2,\ldots, m-1$}
 
We first choose $j=1$ and shorten $C(n^{\prime},g)$ in the range $[2, p]$. 
Let $\mathcal{W}_1$ be the vector space spanned by the rows $\mathbf{g}_0$ and $\mathbf{g}_{p}, \mathbf{g}_{p+1}, \ldots, \mathbf{g}_{k^{\prime}-1}$, 
i.e., the first row of $G$ and all the rows after $\mathbf{g}_{p}$ of $G$ in (\ref{Eqn_G_matrix_pre}).
The vectors in $\mathcal{W}_1$ have zeros in the coordinate location range $[2,p]$.
We now prove that the set of codewords in $C(n^{\prime},g)$ that have zeros in this range is exactly 
the vector space $\mathcal{W}_1$.
Consider the matrix formed by the rows $\mathbf{g}_0, \mathbf{g}_1, \ldots, \mathbf{g}_{p}$ of $G$, i.e., 
the initial $p+1$ rows of $G$ as follows,
\begin{align}
\left[
\begin{array}{c}
 \mathbf{g}_0\\ 
 \mathbf{g}_1\\
 \vdots\\  
 \mathbf{g}_{p-1}\\ 
 \mathbf{g}_{p}\\
\end{array}
\right]
=
\left[
\begin{array}{c :c  c  c  c c  c c : c c c  c c c c c c c c c c}
f_0  & 0 & \cdot & \cdot & \cdot & \cdot & \cdot & 0 & f_1 &  0 & \cdot & \cdot &  \cdot &  \cdot &  \cdot & 0\\
0 & f_0 & 0 & \cdot & \cdot & \cdot & \cdot & 0  & 0 & f_1 &  \cdot & \cdot &  \cdot &  \cdot &  \cdot & 0\\
\vdots &  &  \ddots  &  &         &     &    &\vdots     &   & &    \ddots   \\
0 & \cdot & \cdot & \cdot & \cdot & \cdot & 0 & f_0 & 0  & \cdot & \cdot &  \cdot & \cdot &  \cdot &  \cdot & 0\\
0 & \cdot & \cdot & \cdot & \cdot & \cdot & \cdot & 0 & f_0 & \cdot  & \cdot & \cdot &  \cdot & \cdot &  \cdot & 0\\
\end{array} 
\right],
\label{Eqn_G1_matrix_intermediate}
\end{align}
where the coordinate location range between the two vertical dashed lines in (\ref{Eqn_G1_matrix_intermediate}) is $[2,p]$.
Recall that $f_0=1$ (see (\ref{Eqn_f})).
From (\ref{Eqn_G_matrix_pre}) and (\ref{Eqn_G1_matrix_intermediate}) this implies that a non-zero codeword in $C(n^{\prime},g)$ that
is a linear combination of the rows $\mathbf{g}_{1}, \mathbf{g}_{2}, \ldots, \mathbf{g}_{p-1}$, 
will have at least one non-zero coordinate in the range $[2,p]$. 
Thus the set of codewords in $C(n^{\prime},g)$ that have zeros in the range $[2,p]$ 
is exactly the vector space $\mathcal{W}_1$.

We will now shorten $C(n^{\prime},g)$ in the range $[2,p]$. 
In the first step of shortening, we will get subspace $\mathcal{W}_1$ (see Definition~\ref{Definition_Shortening}). 
In the second step, puncturing operation will be performed
in the coordinate location range $[2,p]$ of $\mathcal{W}_1$.
A generator matrix $G_1^{\prime}$ of the code obtained by shortening $C(n^{\prime},g)$ in this range is given by,
\begin{align}
G_1^{\prime}
=
\left[
\hspace{-0.05in}
\begin{array}{c c c c c :c:  c c c c c  c c c c c c c}
f_0  & f_1 & \mathbf{0}_{p-1}  & \cdot & \cdot  & f_{m-1} & f_m & \mathbf{0}_{p-2} & 0   & \cdot & \cdot & \cdot & \cdot  & \cdot& 0\\
0  & f_0   & \mathbf{0}_{p-1} & \cdot & \cdot  & f_{m-2} & 0 & \mathbf{0}_{p-2} &  f_{m-1} & f_m & 0 & \cdot & \cdot & \cdot  & 0\\
\vdots   &         & \ddots   &    & &\vdots  &  \vdots   & &   &\ddots & & & &&\vdots  \\
0  & \cdot & \cdot   & \cdot &  \cdot & f_0 & 0 & \mathbf{0}_{p-2} & f_1 & \cdot & \cdot  & f_m & 0 & \cdot  & 0\\
\vdots   &         &    &     & & &  \vdots & & & &  &\ddots &  & &\vdots    \\
0  & \cdot & \cdot   & \cdot  & \cdot & \cdot & \cdot & \cdot &  & \cdot & \cdot  & \cdot & \cdot & \cdot  & f_m\\
\end{array} 
\hspace{-0.05in}
\right].
\label{Eqn_G1_matrix_intermediate2}
\end{align}

Observe that shortening of $C(n^{\prime},g)$ in the coordinate location range $[2,p]$ eliminated the rows $\mathbf{g}_1, \mathbf{g}_2, \ldots, \mathbf{g}_{p-1}$
of $G$. Thus the number of rows of $G_1^{\prime}$ is equal to $k^{\prime}-(p-1)$.
Using similar arguments it can be proved that shortening of $C(n^{\prime},g)$ in the range $\big[ (j-1)p + 2, jp\big]$ 
will eliminate the rows $\mathbf{g}_{(j-1)p + 1}, \mathbf{g}_{(j-1)p+2}, \ldots, \mathbf{g}_{jp-1}$ of $G$
for $j = 1,2,\ldots, m-1$. A generator matrix $G_1$ of the code obtained by shortening in these sets is given by,
\footnotesize
\begin{align}
G_1
=
\left[
\hspace{-0.05in}
\begin{array}{c c c c c c : c c c c c  c c c c c c c}
f_0  & f_1 & f_2  & \cdot & \cdot  & f_{m-1} & f_m & \mathbf{0}_{p-2} & 0   & \cdot  & \cdot & \cdot & \cdot & 0\\
0  & f_0   & f_1 & \cdot & \cdot  & f_{m-2} & 0 & \mathbf{0}_{p-2} &  f_{m-1} & f_m  & \cdot & \cdot & \cdot  & 0\\
\vdots   &  \vdots       &    &    & &\vdots  &  \vdots   & &   &\ddots & & &\vdots  \\
0  & \cdot & \cdot   & \cdot &  \cdot & f_0 & 0 & \mathbf{0}_{p-2} & f_1 & \cdot & \cdot  & f_m  & \cdot  & 0\\
\vdots   &  \vdots       &    &     & & &  \vdots & & &   &\ddots &      \\
0  & \cdot & \cdot   & \cdot  & \cdot & \cdot & \cdot & \cdot &  & \cdot & \cdot  & \cdot  & \cdot & f_m\\
\end{array} 
\hspace{-0.05in}
\right]
\eqqcolon
\begin{bmatrix}
 \mathbf{h}_0\\ 
 \mathbf{h}_1\\
 \vdots\\  
 \mathbf{h}_{m-1}\\ 
 \vdots\\
 \mathbf{h}_{k_1} 
\end{bmatrix},
\label{Eqn_G1_matrix}
\end{align}
\normalsize
where $k_1 = \big[k^{\prime} -1\big] - \big[(m-1)(p-1)\big]$ and $\mathbf{h}_0, \mathbf{h}_1, \ldots, \mathbf{h}_{k_1}$ are 
defined as the rows of $G_1$.
Let $C_1$ be the linear block code generated by $G_1$ defined in (\ref{Eqn_G1_matrix}).

~\\
B] \textit{Shortening of $C_1$ at the last $k^{\prime}-1 - [2np(k-1)]$ locations} 

From (\ref{Eqn_G1_matrix}) we have $k_1 \geq m-1$. 
Substituting the value of $m = 2n(k-1) + n + 1$ we get $k_1 \geq 2n(k-1) + n$, which
implies that $k_1 > 2n(k-1)$.
Thus the condition $k_1 \geq m-1$ ensures that $G_1$ has at least $2n(k-1)$ rows.
In order to perform this shortening operation, we first write the generator matrix $G_1$ 
in a convenient form.
We substitute the value of the vector 
$[f_0 \mbox{~} f_1 \mbox{~} \ldots \mbox{~} f_{m-1}] = [1 \mbox{~} \mathbf{v}_{1} \mbox{~} \mathbf{0}_{n} \mbox{~} \mathbf{v}_{2} \mbox{~} \mathbf{0}_{n} \mbox{~} \cdots  \mbox{~} \mathbf{0}_{n} \mbox{~}  \mathbf{v}_{k}]$ 
from (\ref{Eqn_f}) and
form a matrix using the rows $\mathbf{h}_0$ and $\mathbf{h}_{2n}$ of $G_1$ as follows,
\begin{align}
\left[
\begin{array}{c}
 \mathbf{h}_0\\
 \mathbf{h}_{2n}\\
\end{array}
\right]
& =
\left[
\begin{array}{c c  c c :c :c  c c c c c  c c c c c c c }
[1 \mbox{~} \mathbf{v}_{1} \mbox{~} \mathbf{0}_{n}] & \mathbf{v}_{2} & \cdot  & \cdot & \mathbf{v}_{k} & f_m & 0 &  \cdot &  \cdot &  \cdot &    \cdot & \cdot & 0\\
 \mbox{[} \mathbf{0}_{2n} \mbox{~~} 1 ]   & \mathbf{v}_{1} & \cdot & \cdot  & \mathbf{v}_{k-1} & 0 & \cdot & \cdot  & f_m & 0 &  \cdot & \cdot & 0 \\
\end{array} 
\right].
\label{Eqn_G_g0_g2n}
\end{align}
From (\ref{Eqn_G_g0_g2n}), it can be seen that the coordinate location of the vector $\mathbf{v}_k$ in $\mathbf{h}_0$
and the location of the vector $\mathbf{v}_{k-1}$ in $\mathbf{h}_{2n}$ is the same, which is indicated using the two vertical dashed lines. 
In general, the coordinate location of $\mathbf{v}_{k-j}$ in $\mathbf{h}_{2nj}$ will be same as that of 
the location of $\mathbf{v}_k$ in $\mathbf{h}_0$, for $j = 1,2, \ldots, k-1$.
Using this the generator matrix $G_1$ can be written as
\footnotesize
\begin{align}
%
\begin{bmatrix}
 \mathbf{h}_0\\
 \vdots\\
 \mathbf{h}_{2n}\\
 \vdots\\ 
 \mathbf{h}_{2n(k-1)}\\
 \mathbf{h}_{2n(k-1)+1}\\
 \vdots\\
 \mathbf{h}_{k_1}
\end{bmatrix}
=
\left[
\hspace{-0.05in}
\begin{array}{c c  c c c c  c c c c :c  c c c c c c c }
[1 \mbox{~} \mathbf{v}_{1} \mbox{~} \mathbf{0}_{n}] & \mathbf{v}_{2} & \cdot  & \cdot & \mathbf{v}_{k} & f_m & 0   &  \cdot & & \cdot & \cdot & \cdot   & \cdot & 0\\
\vdots & \vdots &    &  &       \vdots  &     &   & \ddots &   & &  &&          &    \vdots   \\
 \mbox{[} \mathbf{0}_{2n} \mbox{~} 1 ]   & \mathbf{v}_{1} & \cdot & \cdot  & \mathbf{v}_{k-1} & \cdot & \cdot  & f_m   & & \cdot & \cdot & \cdot & \cdot  & 0 \\
\vdots & \vdots &    &  &       \vdots  &     &     & \ddots & & &  &&          &    \vdots   \\
\mathbf{0}_{2n+1}  & \mathbf{0}_n & \cdot & \cdot & \mathbf{v}_{1}  & \cdot & \cdot & \cdot & \cdot    & f_m & 0  & \cdot & \cdot & 0\\
\mathbf{0}_{2n+1}  & \mathbf{0}_n & \cdot & \cdot & \cdot & \cdot & \cdot & \cdot & \cdot    &   & f_m & 0 & \cdot  & 0\\
\vdots & \vdots &    &  &       \vdots  &     &     & \vdots & & &  &       & \ddots  &    \vdots   \\
\mathbf{0}_{2n+1}  & \mathbf{0}_n & \cdot & \cdot & \cdot & \cdot & \cdot  & \cdot   &  \cdot &  & \cdot & \cdot & \cdot  & f_m\\
\end{array} 
\hspace{-0.05in}
\right].
\label{Eqn_G1_matrix_good_form}
\end{align}
\normalsize

Note that the row $\mathbf{h}_{2n(k-1)}$ of $G_1$ corresponds to the row $\mathbf{g}_{2np(k-1)}$ of $G$. 
From (\ref{Eqn_G_ith_row_pre}) this implies that the number of columns on the right hand side 
the vertical dashed line in (\ref{Eqn_G1_matrix_good_form}) is equal to $k^{\prime}-1 - [2np(k-1)]$.
Using similar arguments as in step A], it can be shown that the shortening of $C_1$
in the last $k^{\prime}-1 - [2np(k-1)]$ locations will provide a code with generator
matrix $G_2$ given by
\begin{align}
G_2=
\left[
\begin{array}{c c  c c c :c  c c c c c  c c  }
[1 \mbox{~} \mathbf{v}_{1} \mbox{~} \mathbf{0}_{n}] & \mathbf{v}_{2} & \cdot  & \cdot & \mathbf{v}_{k} & f_m & 0 & 0 & 0 &  \cdot & \cdot & \cdot & 0\\
\vdots & \vdots &    &  &       \vdots  &     &   & \ddots &  &  &        & &   \vdots   \\
 \mbox{[} \mathbf{0}_{2n} \mbox{~} 1 ]   & \mathbf{v}_{1} & \cdot & \cdot  & \mathbf{v}_{k-1} & \cdot & \cdot & \cdot & f_m  & 0 & \cdot & \cdot  & 0 \\
\vdots & \vdots &    &  &       \vdots  &     &   &  & \ddots &  &        & &   \vdots   \\
\mathbf{0}_{2n+1}  & \mathbf{0}_n & \cdot & \cdot & \mathbf{v}_{1}  & \cdot & \cdot & \cdot & \cdot  &  \cdot & \cdot & \cdot & f_m 
\end{array} 
\right].
\label{Eqn_G2_matrix}
\end{align}
Let $C_2$ be the linear block code generated by $G_2$ defined in (\ref{Eqn_G2_matrix}).

~\\
C] \textit{Puncturing of $C_2$ at the locations other than the initial $m$ columns} 

Comparing (\ref{Eqn_G1_matrix}) and (\ref{Eqn_G2_matrix}), the number of columns on the left hand side of the vertical dashed
line in (\ref{Eqn_G2_matrix}) is equal to $m$. Thus the generator matrix $G_3$ of the code obtained
after this puncturing operation is given by,
\begin{align}
G_3=
\left[
\begin{array}{c c  c c c c  c c c c c  c c  }
[1 \mbox{~} \mathbf{v}_{1} \mbox{~}\mathbf{0}_{n}] & [\mathbf{v}_{2} \mbox{~} \mathbf{0}_{n}] & \cdot  & \cdot & \mathbf{v}_{k} \\
\vdots & \vdots &    &       &            \vdots   \\
 \mbox{[} \mathbf{0}_{2n} \mbox{~} 1 ]   & [\mathbf{v}_{1} \mbox{~} \mathbf{0}_{n}] & \cdot & \cdot  & \mathbf{v}_{k-1}  \\
\vdots & \vdots &    &      &              \vdots   \\
\mathbf{0}_{2n+1}  & \mathbf{0}_{2n} & \cdot & \cdot & \mathbf{v}_{1}  
\end{array} 
\right]
\eqqcolon
\left[
\begin{array}{c}
 \mathbf{b}_0\\
 \vdots\\
 \mathbf{b}_{2n}\\
 \vdots\\ 
 \mathbf{b}_{2n(k-1)}\\
\end{array}
\right],
\label{Eqn_G3_matrix_pre}
\end{align}
where $\mathbf{b}_0, \mathbf{b}_1, \ldots, \mathbf{b}_{2n(k-1)}$ are defined as the rows of $G_3$.
Suppose the codeword $\mathbf{v}_j$ is given by 
\begin{align}
\mathbf{v}_j &= \begin{bmatrix} \mathbf{0}_{l_j} & \mathbf{w}_j & \mathbf{0}_{r_j} \end{bmatrix} 
\label{Eqn_v_division}
\end{align}
where $\mathbf{0}_{l_j}$ and $\mathbf{0}_{r_j}$ are all-zero vectors of lengths $l_j$ and $r_j$
respectively and $\mathbf{w}_j$ is the vector obtained by puncturing the initial $l_j$ and the last $r_j$ entries of $\mathbf{v}_j$
such that the first and the last entry of $\mathbf{w}_j$ are not zero, for $j = 1,2,\ldots,k$.
Let $d_1, d_2, \ldots, d_k$ be the lengths of $\mathbf{w}_1, \mathbf{w}_2, \ldots, \mathbf{w}_k$ respectively, i.e., 
$l_j+d_j+r_j=n$. 
Without loss of generality we assume that $r_1 \leq r_2 \leq \ldots \leq r_k$.
Substituting the value of $\mathbf{v}_j$, the generator matrix $G_3$ can be written as
\begin{align}
%
\begin{bmatrix}
 \mathbf{b}_{0}\\
 \vdots \\
 \mathbf{b}_{2n}\\
 \mathbf{b}_{2n+1}\\
 \vdots\\ 
 \mathbf{b}_{2n(k-1)}
\end{bmatrix} 
=
\left[
\begin{array}{c c  c  c c c:  c: c c c c  c c c c }
1 & \mathbf{0}_{l_1} & \mathbf{w}_{1} & \mathbf{0}_{r_1} & \mathbf{0}_{n-1} & 0 & \big[\mathbf{0}_{l_2} \mbox{~~} \mathbf{w}_{2} \mbox{~~} \mathbf{0}_{r_2} \big] & \cdot & \cdot  & \cdot & \mathbf{v}_{k} \\
\vdots &  &  \vdots  &  &         &   \vdots  &   & &        &    & \vdots   \\
0 & \mathbf{0}_{l_1} & \mathbf{0}_{d_1} & \mathbf{0}_{r_1} & \mathbf{0}_{n-1} & 1 & \big[ \mathbf{0}_{l_1} \mbox{~~} \mathbf{w}_{1} \mbox{~~} \mathbf{0}_{r_1} \big] & \cdot & \cdot  & \cdot & \mathbf{v}_{k-1} \\
0 & \mathbf{0}_{l_1} & \mathbf{0}_{d_1} & \mathbf{0}_{r_1}& \mathbf{0}_{n-1} & 0 & \big[ 1 \mbox{~~} \mathbf{0}_{l_1} \mbox{~~} \mathbf{w}_{1} \mbox{~~} \mathbf{0}_{r_1-1} \big] & \cdot & \cdot & \cdot & \cdot \\
\vdots &  &  \vdots  &  & &           \vdots &         \vdots  & & & & \vdots \\
0 & \mathbf{0}_{l_1} & \mathbf{0}_{d_1}  & \mathbf{0}_{r_1} & \mathbf{0}_{n-1} & 0 & \mathbf{0}_{n} & \cdot  & \cdot & \cdot & \mathbf{v}_{1} 
\end{array} 
\right],
\label{Eqn_G3_matrix}
\end{align}
where the number of coefficients between the two vertical dashed lies in (\ref{Eqn_G3_matrix}) is equal to $n$.
Let $C_3$ be the linear block code generated by $G_3$.

~\\
D] \textit{Shortening of the code $C_3$ at coordinate locations in the range $\big[(2j-1)n+1-r_1,2jn \big]$, for $j=1,2, \ldots, k-1$ }

We first choose $j=1$ and shorten $C_3$ in the range $\big[n+1-r_1, 2n\big]$.
Let $\mathcal{W}_2$ be the vector space spanned by the rows $\mathbf{b}_0$ and $\mathbf{b}_{2n}, \mathbf{b}_{2n+1}, \ldots, \mathbf{b}_{2n(k-1)}$, 
of $G_3$, i.e., the first row of $G_3$ and all the rows after $\mathbf{b}_{2n}$ in (\ref{Eqn_G3_matrix}).
A generator matrix of $\mathcal{W}_2$ is given by
\begin{align}
%
\begin{bmatrix}
 \mathbf{b}_{0}\\
 \mathbf{b}_{2n}\\
 \mathbf{b}_{2n+1}\\
 \vdots\\ 
 \mathbf{b}_{2n(k-1)}
\end{bmatrix} 
=
\left[
\begin{array}{c c  c : c c: c  c c c c c  c c c c }
1 & \mathbf{0}_{l_1} & \mathbf{w}_{1} & \mathbf{0}_{r_1} & \mathbf{0}_{n-1} & 0 & \big[\mathbf{0}_{l_2} \mbox{~~} \mathbf{w}_{2} \mbox{~~} \mathbf{0}_{r_2} \big] & \cdot & \cdot  & \cdot & \mathbf{v}_{k} \\
0 & \mathbf{0}_{l_1} & \mathbf{0}_{d_1} & \mathbf{0}_{r_1} & \mathbf{0}_{n-1} & 1 & \big[ \mathbf{0}_{l_1} \mbox{~~} \mathbf{w}_{1} \mbox{~~} \mathbf{0}_{r_1} \big] & \cdot & \cdot  & \cdot & \mathbf{v}_{k-1} \\
0 & \mathbf{0}_{l_1} & \mathbf{0}_{d_1} & \mathbf{0}_{r_1}& \mathbf{0}_{n-1} & 0 & \big[ 1 \mbox{~~} \mathbf{0}_{l_1} \mbox{~~} \mathbf{w}_{1} \mbox{~~} \mathbf{0}_{r_1-1} \big] & \cdot & \cdot & \cdot & \cdot \\
\vdots &  &  \vdots  &  & &           \vdots &         \vdots  & & & & \vdots \\
0 & \mathbf{0}_{l_1} & \mathbf{0}_{d_1}  & \mathbf{0}_{r_1} & \mathbf{0}_{n-1} & 0 & \mathbf{0}_{n} & \cdot  & \cdot & \cdot & \mathbf{v}_{1} 
\end{array} 
\right],
\label{Eqn_G4_W2}
\end{align}
where the range $\big[n+1-r_1, 2n\big]$ is indicated by the two vertical dashed lines. It can be seen that every vector in $\mathcal{W}_2$
has zeros in this range. We next prove that the set of codewords in $C_3$ that has zeros in this range is exactly the vector space $\mathcal{W}_2$.

Suppose $[ \mathbf{0}_{l_1} \mbox{~} \mathbf{w}_1] = [w_{1} \mbox{~~} w_{2} \mbox{~~} \ldots \mbox{~} w_{d}]$, where $d=l_1+d_1$.
Using this the matrix formed by the rows $\mathbf{b}_0, \mathbf{b}_1, \ldots, \mathbf{b}_{2n}$ of $G_3$, i.e., the initial $2n+1$ rows of $G_3$ is
as follows,
\footnotesize
\begin{align}
%
\begin{bmatrix}
 \mathbf{b}_0\\ 
 \mathbf{b}_1\\
 \vdots\\  
 \mathbf{b}_{d}\\ 
 \mathbf{b}_{d+1}\\
 \vdots\\
 \mathbf{b}_{2n-1}\\ 
 \mathbf{b}_{2n}
\end{bmatrix} 
=
\left[
\hspace{-0.05in}
\begin{array}{c c  c  c  :c c  c c c c c  c :c c c c c c c c c}
1 & w_1   & \cdot & w_{d} & 0 & \cdot & \cdot &  \cdot &  \cdot &  \cdot & \cdot & 0 & 0 &  [\mathbf{0}_{l_2} \mbox{~} \mathbf{w}_2 \mbox{~} \mathbf{0}_{r_2} ] & \cdot   &  \mathbf{v}_k\\
0 & 1   & \cdot & \cdot & w_{d} & 0 & \cdot &  \cdot &  \cdot &  \cdot & \cdot & 0& 0 &  \cdot \mbox{~~~} \cdot & \cdot   &  \cdot\\
 \vdots       &  &       &  &     &   \ddots &     &   & &  & & & &  & &    \vdots   \\
0   & \cdot & \cdot & 1 & w_1 & \cdot & \cdot &  w_d &  0 &  \cdot & \cdot & 0 & 0 &  \cdot \mbox{~~~} \cdot & \cdot   &  \cdot\\
0   & \cdot & \cdot & \cdot & 1 & w_1 & \cdot & \cdot &  w_d &  0  & \cdot & 0 & 0 &  \cdot \mbox{~~~} \cdot & \cdot   &  \cdot\\
 \vdots       &  &       &  &     &    &     & \ddots  & &  & & & &  & &    \vdots   \\
0   & \cdot & \cdot & \cdot & \cdot & \cdot & \cdot &  \cdot &  \cdot &  \cdot & \cdot & 1 & \cdot &  \cdot \mbox{~~~} \cdot & \cdot   &  \cdot \\
0   & \cdot & \cdot & \cdot & \cdot & \cdot & \cdot &  \cdot &  \cdot &  \cdot & \cdot & 0 & 1 &  [\mathbf{0}_{l_1} \mbox{~} \mathbf{w}_1 \mbox{~} \mathbf{0}_{r_1} ] & \cdot   &  \mathbf{v}_{k-1}
\end{array} 
\hspace{-0.05in}
\right].
\label{Eqn_G4_initial_2n}
\end{align}
\normalsize
where the range $\big[n+1-r_1, 2n\big]$ is indicated by the two vertical dashed lines.
Since $w_d$ is not equal to zero (see (\ref{Eqn_v_division})), 
it can be seen that a non-zero codeword in $C_3$ that is a linear combination
of the rows $\mathbf{b}_{1}, \mathbf{b}_{2}, \ldots, \mathbf{b}_{d}$, 
will have at least one non-zero entry in the range $\big[d+2, 2d+2 \big] \subseteq \big[n+1-r_1, 2n\big]$. 
Further, a non-zero codeword in $C_3$ that is a linear combination 
of the rows $\mathbf{b}_{d+1}, \mathbf{b}_{d+2}, \ldots, \mathbf{b}_{2n-1}$, 
will have at least one non-zero entry in the range $\big[n+1-r_1, 2n \big]$.
Therefore the set of codewords in $C_3$ that have zeros in the range $\big[n+1-r_1, 2n\big]$
is exactly the vector space $\mathcal{W}_2$. 

Using steps similar to step A], it can be shown that shortening of $C_3$ in the range $\big[n+1-r_1, 2n\big]$
will eliminate the rows $\mathbf{b}_1, \mathbf{b}_2, \ldots, \mathbf{b}_{2n-1}$ of $G_3$, i.e., 
when $j=1$, the rows of $G_3$ between $\mathbf{b}_0$ to $\mathbf{b}_{2n}$ were eliminated. 
We next prove that shortening of $C_3$ in the range $\big[(2j-1)n+1-r_1, 2jn\big]$, 
the rows of $G_3$ between $\mathbf{b}_{2n(j-1)}$ to $\mathbf{b}_{2jn}$ will get eliminated, for $j = 1,2,\ldots, k-1$.
The generator matrix $G_3$ with a focus on these rows is given by
\footnotesize
\begin{align}
\begin{bmatrix}
\mathbf{b}_0\\ 
\vdots\\  
\mathbf{b}_{2n}\\
\vdots\\  
\mathbf{b}_{2n(j-2)}\\ 
\vdots\\  
\mathbf{b}_{2n(j-1)}\\
\mathbf{b}_{2n(j-1)+1}\\
 \vdots\\
 \mathbf{b}_{2nj}\\ 
 \vdots \\
 \mathbf{b}_{2n(k-1)}
\end{bmatrix}
=
\left[
\hspace{-0.05in}
\begin{array}{c c : c  :c  c :c  :c c c c c  c c c c c c c c c c}
1   &  \cdot &  [\mathbf{0}_{l_j} \mbox{~~~} \mathbf{w}_j \mbox{~~~} \mathbf{0}_{r_j} ] & \mathbf{0}_{n-1} & 0  &  [\mathbf{0}_{l_{j+1}} \mbox{~} \mathbf{w}_{j+1} \mbox{~} \mathbf{0}_{r_{j+1}} ] & \cdot & \cdot &  \mathbf{v}_k\\
\vdots         &      & \vdots   &     &    & \vdots    &  & & \vdots   \\
0   &  \cdot &  [\mathbf{0}_{l_{j-1}} \mbox{~} \mathbf{w}_{j-1} \mbox{~} \mathbf{0}_{r_{j-1}} ] & \mathbf{0}_{n-1} & 0   &  [\mathbf{0}_{l_j} \mbox{~~~} \mathbf{w}_j \mbox{~~~} \mathbf{0}_{r_j} ] & \cdot & \cdot &  \mathbf{v}_{k-1}\\
\vdots         &      & \vdots   &     &    & \vdots    &  & & \vdots   \\
0   &  \cdot &  [\mathbf{0}_{l_{2}} \mbox{~~~} \mathbf{w}_{2} \mbox{~~~} \mathbf{0}_{r_{2}} ] & \mathbf{0}_{n-1} & 0  &  [\mathbf{0}_{l_3} \mbox{~~~} \mathbf{w}_3 \mbox{~~~} \mathbf{0}_{r_3} ] & \cdot & \cdot &  \cdot\\
\vdots         &      & \vdots   &     &    & \vdots    &  & & \vdots   \\
0   &  \cdot &  [\mathbf{0}_{l_{1}} \mbox{~~~} \mathbf{w}_{1} \mbox{~~~} \mathbf{0}_{r_{1}} ] & \mathbf{0}_{n-1} & 0  &  [\mathbf{0}_{l_2} \mbox{~~~} \mathbf{w}_2 \mbox{~~~} \mathbf{0}_{r_2} ] & \cdot & \cdot &  \cdot\\
0   &  \cdot &  [1 \mbox{~~} \mathbf{0}_{l_{1}} \mbox{~~} \mathbf{w}_{1} \mbox{~~} \mathbf{0}_{r_{1}-1} ] & \mathbf{0}_{n-1} & 0   &  [0 \mbox{~~} \mathbf{0}_{l_2} \mbox{~~} \mathbf{w}_2 \mbox{~~} \mathbf{0}_{r_2-1} ] & \cdot & \cdot &  \cdot\\
\vdots         &      & \vdots   &     &    & \vdots    &  & & \vdots   \\
0   &  \cdot &  \cdot \mbox{~~~} \cdot & \mathbf{0}_{n-1} & 1  &  [\mathbf{0}_{l_1} \mbox{~~~} \mathbf{w}_1 \mbox{~~~} \mathbf{0}_{r_1} ] & \cdot & \cdot &  \mathbf{v}_{k-j}\\
\vdots         &      & \vdots   &     &    & \vdots    &  & & \vdots   \\
0   &  \cdot &  \cdot \mbox{~~~} \cdot &  \cdot &  \cdot  &   \cdot & \cdot & \cdot &  \mathbf{v}_{1}
\end{array} 
\hspace{-0.05in}
\right].
\label{Eqn_G4_induction}
\end{align}
\normalsize
Observe that the structure of the rows $\mathbf{b}_{2n(j-1)}, \mathbf{b}_{2n(j-1)+1}, \ldots, \mathbf{b}_{2nj}$ of the matrix in (\ref{Eqn_G4_induction})
is the same as that of the rows $\mathbf{b}_0, \mathbf{b}_1, \ldots, \mathbf{b}_{2n}$ the matrix $G_3$ in (\ref{Eqn_G3_matrix}).
Further, since $r_1 \leq r_2 \leq \ldots \leq r_k$, it can be seen that the vector space spanned by the 
rows $\mathbf{b}_0, \mathbf{b}_{2n}, \ldots, \mathbf{b}_{2n(j-1)}$ 
and the all the rows after row $\mathbf{b}_{2n(j-1)}$ have zeros in the range $\big[(2j-1)n+1-r_1,2jn \big]$. 
Using similar arguments as that of the case when $j=1$ it can be proved that shortening in this range will eliminate
the rows of $G_3$ between $\mathbf{b}_{2n(j-1)}$ to $\mathbf{b}_{2jn}$.
A generator matrix $G_4$ of the code obtained after the above mentioned set of shortening
operations is given by
\begin{align}
G_4=
\left[
\begin{array}{c}
 \mathbf{b}_{0}\\
 \mathbf{b}_{2n}\\
 \vdots\\ 
 \mathbf{b}_{2n(k-1)}\\
\end{array}
\right]
=
\left[
\begin{array}{c c  c  c c c  c c:c c c  c c c c }
1 & \mathbf{0}_{l_1} & \mathbf{w}_{1} &  0 & \big[\mathbf{0}_{l_2} \mbox{~~} \mathbf{w}_{2} \mbox{~~} \mathbf{0}_{r_2-r_1} \big] & \cdot & \cdot & \cdot & \mathbf{v}_{k} \\
0 & \mathbf{0}_{l_1} & \mathbf{0}_{d_1} & 1 & \big[ \mathbf{0}_{l_1} \mbox{~~~} \mathbf{w}_{1} \big]& \cdot & \cdot & \cdot & \mathbf{v}_{k-1} \\
\vdots &  &  \vdots  &  &           \vdots &     &   &  \vdots   \\
0 & \cdot & \cdot  & \cdot & \cdot & \cdot & \cdot & \cdot & \mathbf{v}_{1} 
\end{array} 
\right].
\label{Eqn_G4_matrix}
\end{align}
Let $C_4$ be the code corresponding to the generator matrix $G_4$ in (\ref{Eqn_G4_matrix}).

~\\
E] \textit{Puncturing of $C_4$ at the coordinate locations other than the last $n$ locations}

The puncturing operation in this range will delete the columns of $G_4$ that are on the left hand side of the vertical dashed line in (\ref{Eqn_G4_matrix}).
A generator matrix $G_5$ of the code obtained by this puncturing operation is given by,
\begin{align}
G_3=
\left[
\begin{array}{c}
\mathbf{v}_{k}  \\
\mathbf{v}_{k-1}  \\
\vdots      \\
\mathbf{v}_{1}  
\end{array} 
\right].
\label{Eqn_G5_matrix}
\end{align}

Observe that, $G_5$ is a generator matrix of the required linear block code $C$ of the theorem and this completes the proof.
\end{proof}


\section{Conclusion}
\label{Section_Conclusion}

In this paper, we proved that any linear block code can be obtained by a sequence of puncturing and/or shortening
of some cyclic code. While the result is in itself interesting, due to the recent result given by Nelson and Vam
Zwam~\cite{Nelson_2015}, our result may have applications in studying the long-standing open problem of deciding whether the
family of cyclic codes is asymptotically good or not. The result given by Nelson and Vam Zwam says that, given
a family of asymptotically good codes, any linear block code can be obtained by a sequence of puncturing and/or
shortening of some code in this family. Our result essentially proves that the family of cyclic codes satisfies this
condition.

\section*{Acknowledgments}
The first author is supported by ANR-11-LABEX-0040-CIMI within the program ANR-11-IDEX-0002-02 of
the Centre International de Math{\'e}matiques et Informatique de Toulouse, France.
The first author would also like to acknowledge the support of the Bharti Centre for Communication at IIT Bombay, India.

%

\medskip

Received for publication xxx.

\medskip

{\it E-mail address: }artidilip.yardi@irit.fr\\
\indent{\it E-mail address: }g.r.pellikaan@tue.nl

\end{document}